\crefname{equation}{equation}{equations}
\renewcommand{\vec}{\mathbf}
\newcommand{\pder}[2]{\frac{\partial{#1}}{\partial{#2}}}
\DeclareMathOperator{\Span}{span}
\DeclareMathOperator*{\argmax}{arg\,max}
\DeclareMathOperator*{\argmin}{arg\,min}
\newcommand{\casesif}{\quad\textnormal{if }\,} 
\newcommand{\casestextn}[1]{\quad\textnormal{#1}} 
\def\setst{\@ifstar\@setst\@@setst} 
\newcommand{\@setst}{\:\middle\vert\:}
\newcommand{\@@setst}[1][]{\:#1\vert\:} 
\newcommand{\R}{\mathbb{R}}
\renewcommand{\S}{\mathbb{S}}
\let\save@mathaccent\mathaccent
\newcommand*\if@single[3]{%
  \setbox0\hbox{\({\mathaccent"0362{#1}}^H\)}%
  \setbox2\hbox{\({\mathaccent"0362{\kern0pt#1}}^H\)}%
  \ifdim\ht0=\ht2 #3\else #2\fi
  }
\newcommand*\rel@kern[1]{\kern#1\dimexpr\macc@kerna}
\newcommand*\widebar[1]{\@ifnextchar^{\wide@bar{#1}{0}}{\wide@bar{#1}{1}}}
\newcommand*\wide@bar[2]{\if@single{#1}{\wide@bar@{#1}{#2}{1}}{\wide@bar@{#1}{#2}{2}}}
\newcommand*\wide@bar@[3]{%
  \begingroup
  \def\mathaccent##1##2{%
    \let\mathaccent\save@mathaccent
    \if#32 \let\macc@nucleus\first@char \fi
    \setbox\z@\hbox{\(\macc@style{\macc@nucleus}_{}\)}%
    \setbox\tw@\hbox{\(\macc@style{\macc@nucleus}{}_{}\)}%
    \dimen@\wd\tw@
    \advance\dimen@-\wd\z@
    \divide\dimen@ 3
    \@tempdima\wd\tw@
    \advance\@tempdima-\scriptspace
    \divide\@tempdima 10
    \advance\dimen@-\@tempdima
    \ifdim\dimen@>\z@ \dimen@0pt\fi
    \rel@kern{0.6}\kern-\dimen@
    \if#31
      \overline{\rel@kern{-0.6}\kern\dimen@\macc@nucleus\rel@kern{0.4}\kern\dimen@}%
      \advance\dimen@0.4\dimexpr\macc@kerna
      \let\final@kern#2%
      \ifdim\dimen@<\z@ \let\final@kern1\fi
      \if\final@kern1 \kern-\dimen@\fi
    \else
      \overline{\rel@kern{-0.6}\kern\dimen@#1}%
    \fi
  }%
  \macc@depth\@ne
  \let\math@bgroup\@empty \let\math@egroup\macc@set@skewchar
  \mathsurround\z@ \frozen@everymath{\mathgroup\macc@group\relax}%
  \macc@set@skewchar\relax
  \let\mathaccentV\macc@nested@a
  \if#31
    \macc@nested@a\relax111{#1}%
  \else
    \def\gobble@till@marker##1\endmarker{}%
    \futurelet\first@char\gobble@till@marker#1\endmarker
    \ifcat\noexpand\first@char A\else
      \def\first@char{}%
    \fi
    \macc@nested@a\relax111{\first@char}%
  \fi
  \endgroup
}
\theoremstyle{plain}
\newtheorem{proposition}{Proposition}
\newtheorem{theorem}{Theorem}
\newtheorem*{theorem*}{Theorem}
\theoremstyle{definition}
\DeclarePairedDelimiter\abs{\lvert}{\rvert}
\DeclarePairedDelimiter\norm{\lVert}{\rVert}
\DeclarePairedDelimiter\paren{\lparen}{\rparen}
\DeclarePairedDelimiter\bracks{\lbrack}{\rbrack}
\DeclarePairedDelimiterXPP{\eval}[2]{}{.}{\rvert}{_{#2}}{#1}
\let\oldeval\eval
\def\eval{\@ifstar{\oldeval}{\oldeval*}}
\let\set\relax
\DeclarePairedDelimiter\set{\lbrace}{\rbrace}
\let\braket\relax
\DeclarePairedDelimiter{\braket}{\langle}{\rangle}
\let\originalleft\left
\let\originalright\right
\renewcommand{\left}{\mathopen{}\mathclose\bgroup\originalleft}
\renewcommand{\right}{\aftergroup\egroup\originalright}
\newcommand\fs@booktabsruled{%
  \def\@fs@cfont{\bfseries\strut}\let\@fs@capt\floatc@ruled
  \def\@fs@pre{\hrule height\heavyrulewidth depth0pt \kern\belowrulesep}%
  \def\@fs@mid{\kern\aboverulesep\hrule height\lightrulewidth\kern\belowrulesep}%
  \def\@fs@post{\kern\aboverulesep\hrule height\heavyrulewidth\relax}%
  \let\@fs@iftopcapt\iftrue
}
\newcounter{algorithmicH}
\let\oldalgorithmic\algorithmic
\renewcommand{\algorithmic}{%
  \stepcounter{algorithmicH}
  \oldalgorithmic}
\renewcommand{\theHALG@line}{ALG@line.\thealgorithmicH.\arabic{ALG@line}}
\newlist{proofenum}{enumerate}{1}
\setlist[proofenum,1]{wide, labelwidth=!, labelindent=0pt,label={(\roman*)}}
\newenvironment{proofenumerate}{\begin{proofenum}}{\qedhere\end{proofenum}}
\title{An algorithm to reconstruct convex polyhedra
\\
from their face normals and areas}
\author[a]{Giuseppe Sellaroli}
\affil[a]{University of Waterloo}
\date{\today}
\newcommand{\softwareurl}{\href{https://github.com/gsellaroli/polyhedrec}{https://github.com/gsellaroli/polyhedrec}}
\begin{document}
\begin{titlepage}
\makeatletter
\centering
\begin{doublespace}
{\bfseries \Large \@title}
\end{doublespace}
\vspace{\baselineskip}\par
\large\@author
\vspace{\baselineskip}\par
\normalfont\normalsize\@date
\vspace{1.5\baselineskip}\par
\makeatother
\begin{abstract}
A well-known result in the study of convex polyhedra, due to Minkowski, is that a convex polyhedron is uniquely determined (up to translation) by the directions and areas of its faces. The theorem guarantees existence of the polyhedron associated to given face normals and areas, but does not provide a constructive way to find it explicitly. This article provides an algorithm to reconstruct 3D convex polyhedra from their face normals and areas, based on an method by Lasserre to compute the volume of a convex polyhedron in \( \R^n \).
A \emph{Python} implementation of the algorithm is available at {\softwareurl}.
\end{abstract}
\end{titlepage}

\section{Introduction}

Minkowski's theorem is a well-known result in the theory of convex polyhedra, stating that a bounded convex polyhedron is uniquely determined by its outward face normals and its face areas, up to translation. However, the theorem does not provide a procedure to explicitly reconstruct the polyhedron associated to face normals and areas.
The results of Minkowski's theorem are very important, and are for example widely used in \emph{loop quantum gravity}, where vectors encoding both face normals and areas are used to represent geometry through the convex polyhedra corresponding to them~\cite{Rovelli2009,BianchiDonaSpeziale2011}.
Using face normals and areas as variables to describe polyhedra has the disadvantage that it is generally not obvious what the polyhedron actually looks like. This article overcomes this obstacle by introducing a numerical algorithm, henceforth referred to as \emph{polyhedrec}, to reconstruct the polyhedron from normals and areas. Pseudo-code for the algorithm is provided, and a \emph{Python} implementation has been made available at {\softwareurl}.

The results of the present paper make use of an algorithm by Lasserre to compute the volume of a polyhedron in \( \R^n \)~\cite{Lasserre1983}, here used to compute areas of convex polygons. The idea behind \emph{polyhedrec} is the following: given \( F \) face normals and areas, it considers a class of convex polyhedra with the given face normals (parametrised by \( F \) real numbers \( h_i \)), then it finds the parameters \( \vec{h}=(h_1,\dotsc,h_F) \) for which the face areas of the polyhedron \( P(\vec{h}) \) match the given areas.

It should be noted that a reconstruction technique similar to \emph{polyhedrec} was already introduced by Bianchi, Doná, and Speziale~\cite{BianchiDonaSpeziale2011}, and indeed this article inspired the use of Lasserre's algorithm here. Despite the similarity with this earlier work, the algorithm presented in the following has several advantages:
\begin{itemize}
\item both algorithms work by to finding the roots of a vector function, but \emph{polyhedrec} uses a function with a single root, while the function in  ref.~\cite{BianchiDonaSpeziale2011} has three degrees of freedom, resulting in infinitely many roots;
\item \emph{polyhedrec} provides an explicit expression for the Jacobian matrix of the function, improving the convergence of the root-finding algorithm;
\item \emph{polyhedrec} provides pseudo-code and an implementation of the algorithm, while many of the steps in ref.~\cite{BianchiDonaSpeziale2011} are not made explicit\footnote{It should be noted that the reconstruction algorithm is not the main topic of ref.~\cite{BianchiDonaSpeziale2011}, which focuses on the applications of polyhedra in loop quantum gravity.};
\item there are some oversights and erroneous statements in ref.~\cite{BianchiDonaSpeziale2011} that are corrected here.
\end{itemize}

The article is structured as follows. \Cref{sec:premise} provides a brief review of Minkowski's theorem and introduces notation. The actual algorithm is presented in \cref{sec:reconstruction}; the section is divided in three parts, focusing respectively on finding the face areas of a generic convex polyhedron, computing the Jacobian matrix of the the face area function, and finding the reconstructed polyhedron. Finally, some concluding remarks are given in \cref{sec:conclusions}.

\section{Premise}\label{sec:premise}

We define a convex polyhedron as the region of \( P\subseteq\R^3 \) obtained as the intersection of finitely many \emph{half-spaces}
\begin{equation}
H_i = \set{\vec{x}\in\R^3 \setst \braket{\vec{u}_i,\vec{x}}\leq h_i},\quad \vec{u}_i\in \S^2,\quad h_i\in \R,
\end{equation}
where \( \braket{\cdot,\cdot} \) denotes the Euclidean inner product and \( \S^2 \) is the \( 2 \)-sphere
\begin{equation}
\S^2 = \set{\vec{x}\in \R^3 \setst \norm{\vec{x}}=1}.
\end{equation}
There are obviously infinitely many ways to represent a given polyhedron in terms of half-spaces; the representation is called \emph{minimal} if intersection of each plane
\begin{equation}
Q_i = \partial H_i = \set{\vec{x}\in \R^3 \setst \braket{\vec{u}_i,\vec{x}}= h_i}
\end{equation}
and the polyhedron is a \( 2 \)D subset\footnote{Here we mean that the region \( Q_i \cap P \) has non-zero area.} of \( \R^3 \).
When \( \dim(Q_i\cap P) = 2 \) we are going to call the regions \( f_i := Q_i \cap P \) \emph{faces} of the polyhedron, while the regions \( e_{ij}:= Q_i \cap Q_j \cap P \) are going to be called \emph{edges} when \( \dim(e_{ij})=1 \).

One can easily see that the vectors \( \vec{u}_i \) are the normals to the planes \( Q_i \); moreover, they point outside of the half-space \( H_i \), so we are going to call them \emph{outward normals}. Whenever \( f_i \) is a face we refer to \( \vec{u}_i \) as its \emph{outward face normal}.
Under certain conditions, polyhedra are fully determined by their face normals and and face areas, as proved by Minkowski~\cite{Alexandrov2005}.
\begin{theorem}[Minkoswki]\label{thm:minkowski}
Let \( \set{\vec{u}_i}_{i=1}^F \subset \S^2\) and \( \set{A_i}_{i=1}^F \subset \R \) be respectively a set of \emph{distinct} unit vectors spanning \( \R^3 \) and a set of strictly positive numbers, satisfying the closure constraint
\begin{equation*}
\sum_i A_i\vec{u}_i = 0.
\end{equation*}
Then there is a bounded convex polyhedron \( P \) with outward face normals \( \vec{u}_i \) and face areas \( A_i \), unique up to translation.
\end{theorem}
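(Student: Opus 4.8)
The plan is to treat existence and uniqueness separately, the classical load-bearing ingredient in both being the Brunn--Minkowski inequality. Keep \( \set{\vec{u}_i}_{i=1}^F \) fixed, and note first that \( A_i>0 \) together with \( \sum_iA_i\vec{u}_i=0 \) forces the \( \vec{u}_i \) to \emph{positively} span \( \R^3 \): if \( \braket{\vec{u}_i,\vec{v}}\leq0 \) for all \( i \), then \( \sum_iA_i\braket{\vec{u}_i,\vec{v}}=0 \) makes every term vanish, so \( \vec{v}=0 \). Consequently, for every \( \vec{h}=(h_1,\dotsc,h_F)\in\R^F \) the polyhedron
\[
  P(\vec{h}):=\bigcap_i\set{\vec{x}\in\R^3\setst\braket{\vec{u}_i,\vec{x}}\leq h_i}
\]
is bounded (though possibly empty or lower-dimensional). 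Write \( V(\vec{h}) \) for its volume and \( a_i(\vec{h}) \) for the area of its face with normal \( \vec{u}_i \). I will rely on three standard facts: (i) by Brunn--Minkowski, \( V^{1/3} \) is concave on \( \R^F \), since \( P(\lambda\vec{h}+(1-\lambda)\vec{h}')\supseteq\lambda P(\vec{h})+(1-\lambda)P(\vec{h}') \); (ii) \( V \) is \( C^1 \) on the open set \( \set{V>0} \) with \( \partial V/\partial h_i=a_i(\vec{h}) \) (pushing face \( i \) outward by \( \eder h_i \) sweeps a slab of volume \( a_i\,\eder h_i \)); (iii) translating \( P(\vec{h}) \) by \( \vec{t} \) sends \( h_i\mapsto h_i+\braket{\vec{u}_i,\vec{t}} \), so by the closure constraint the linear functional \( \Phi(\vec{h}):=\sum_iA_ih_i \) is invariant under translation — which is exactly why the polyhedron can be unique only \emph{up to translation}.

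For \textbf{existence} I would minimise \( \Phi \) over \( \mathcal{F}:=\set{\vec{h}\in\R^F\setst V(\vec{h})\geq1} \), which is closed and convex by (i) and is plainly nonempty. Suppose the infimum is attained at some \( \vec{h}^\ast \) (the crux; see below). Then \( V(\vec{h}^\ast)=1 \), since otherwise decreasing all \( h_i \) slightly would keep \( \vec{h} \) in \( \mathcal{F} \) while lowering \( \Phi \); and since \( \Phi \) is linear, the hyperplane \( \set{\Phi=\Phi(\vec{h}^\ast)} \) supports \( \mathcal{F} \) at \( \vec{h}^\ast \). Its normal \( -\vec{A} \) must be a nonnegative multiple of the outward normal \( -\nabla V(\vec{h}^\ast)=-(a_1(\vec{h}^\ast),\dotsc,a_F(\vec{h}^\ast)) \) of \( \mathcal{F} \) (nonzero since positive volume forces positive surface area); hence \( \vec{A}=\lambda\nabla V(\vec{h}^\ast) \) for some \( \lambda\geq0 \), with \( \lambda>0 \) as \( \vec{A}\neq0 \). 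Therefore \( a_i(\vec{h}^\ast)=A_i/\lambda>0 \) for every \( i \), so each \( \vec{u}_i \) is a genuine face normal of \( P(\vec{h}^\ast) \) and (the \( \vec{u}_i \) being distinct) these are all of them. Finally, the rescaled polyhedron \( P:=P(\sqrt{\lambda}\,\vec{h}^\ast)=\sqrt{\lambda}\cdot P(\vec{h}^\ast) \) has face normals \( \vec{u}_i \) and face areas \( \lambda\,a_i(\vec{h}^\ast)=A_i \), which is the desired polyhedron.

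The step I expect to be the \textbf{main obstacle} is the attainment of this infimum, because a minimising sequence \( \vec{h}_n \) is pinned down only modulo translations. I would first replace each \( \vec{h}_n \) by the tight parametrisation, whose \( i \)-th component is \( h_{P(\vec{h}_n)}(\vec{u}_i) \) — this leaves the polyhedron unchanged and does not raise \( \Phi \) — and then translate so that \( P(\vec{h}_n) \) has its centroid at the origin. Writing \( \vec{h}_n=(h_{n,1},\dotsc,h_{n,F}) \) still for the resulting parameters, these are support numbers of a convex body containing the origin in its interior, so \( h_{n,i}>0 \), while \( \sum_iA_ih_{n,i}=\Phi(\vec{h}_n) \) stays bounded along the sequence; hence the \( \vec{h}_n \) lie in a fixed box in \( \R^F \), a subsequence converges to some \( \vec{h}^\ast \), and continuity of \( V \) gives \( V(\vec{h}^\ast)\geq1 \) and \( \Phi(\vec{h}^\ast)=\inf_{\mathcal{F}}\Phi \). (A milder technical point is fact (ii): verifying \( \partial V/\partial h_i=a_i \) across the strata where the combinatorial type of \( P(\vec{h}) \) changes needs a one-sided slab estimate, the slabs over faces of vanishing area being of lower order.)

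For \textbf{uniqueness}, let \( P,P' \) be bounded convex polyhedra whose faces have exactly the normals \( \set{\vec{u}_i} \), with face areas \( A_i \) and support numbers \( h_i=h_P(\vec{u}_i) \), \( h_i'=h_{P'}(\vec{u}_i) \). I would use the mixed volume \( V(\cdot,\cdot,\cdot) \) (normalised so that \( V(K,K,K)=\operatorname{vol}(K) \)) and the identity \( V(P,P,L)=\tfrac{1}{3}\sum_ia_i(P)\,h_L(\vec{u}_i) \), valid for a polytope \( P \) with face normals \( \vec{u}_i \) and any convex body \( L \). Taking \( L=P' \) gives \( V(P,P,P')=\tfrac{1}{3}\sum_iA_ih_i'=\operatorname{vol}(P') \), and, symmetrically, \( V(P',P',P)=\operatorname{vol}(P) \). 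Minkowski's first inequality \( V(P,P,P')^3\geq\operatorname{vol}(P)^2\operatorname{vol}(P') \) then reads \( \operatorname{vol}(P')^2\geq\operatorname{vol}(P)^2 \), and its symmetric counterpart forces \( \operatorname{vol}(P)=\operatorname{vol}(P') \); the equality case of Minkowski's inequality then makes \( P \) and \( P' \) homothetic, and homothetic convex bodies of equal volume are translates of one another. Since Minkowski's first inequality and its equality case are themselves consequences of Brunn--Minkowski, the whole argument indeed rests on that single inequality.
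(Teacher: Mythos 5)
The paper does not actually prove \cref{thm:minkowski}: it states the result and defers to the literature (the citation to Alexandrov), using only its conclusion as a black box. So there is nothing in the text to compare your argument against line by line; what you have written is a self-contained sketch of the classical proof, and it is essentially correct. Your existence argument is the standard variational one (minimise \( \Phi(\vec{h})=\sum_iA_ih_i \) over \( \set{V\geq 1} \), which is convex because \( V^{1/3} \) is concave by Brunn--Minkowski, and read off \( \vec{A}=\lambda\nabla V \) from the supporting-hyperplane/KKT condition), and your uniqueness argument via \( V(P,P,P')=\operatorname{vol}(P') \), Minkowski's first inequality, and its equality case is also the textbook route. You correctly isolate the two points where the real work lies: (a) coercivity/attainment of the infimum modulo translations, which you resolve by tightening to support numbers and centring the centroid so that \( 0<h_{n,i}\leq\Phi(\vec{h}_n)/A_i \) traps the sequence in a box (this works; note that continuity of \( V \) is free because a finite concave function such as \( V^{1/3} \) on \( \R^F \) is automatically continuous); and (b) the differentiability of \( V \) on \( \set{V>0} \) with \( \pder{V}{h_i}=a_i \), which you flag but do not prove --- this is a genuine lemma (continuity of the \( a_i \) where the combinatorial type changes) that any complete write-up would have to include, as would the equality case of Brunn--Minkowski that you invoke for uniqueness. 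One incidental remark: your opening observation that the hypotheses force the \( \vec{u}_i \) to positively span \( \R^3 \), hence that \( P(\vec{h}) \) is bounded for every \( \vec{h} \), is exactly the content and the proof idea of the paper's \cref{prop:P(h) is bounded}, so that part does duplicate an argument the paper carries out explicitly (for a different purpose).
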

In other words, the theorem tells us that, given \( \vec{u}_i \) and \( A_i \) satisfying its hypotheses, there are numbers \(h_i\) such that \( \vec{u}_i \) are
the outward face normals to the polyhedron
\begin{equation}
P = \set{\vec{x} \in \R^3 \setst \braket{\vec{u}_i,\vec{x}} \leq h_i, i =1,\dotsc,F}
\end{equation}
and each \( f_i \) has area \( A_i \). The numbers \( h_i \) are not unique, as they are defined up to a translation
\begin{equation}
\vec{x}  \rightarrow \vec{x} + \vec{c},\quad \vec{c}\in \R^3,
\end{equation}
under which they transform as
\begin{equation}
h_i \rightarrow h_i + \braket{\vec{u}_i,\vec{c}}.
\end{equation}
The theorem is not constructive, so it only tells us that the numbers \( h_i \) exist, not what they are. The goal of this paper is to build an algorithm to reconstruct the actual polyhedron from its unit normals and areas, assuming they satisfy the hypotheses of Minkowski's theorem.

\section{Reconstructing the polyhedron}\label{sec:reconstruction}

Let \( \set{\vec{u}_i}_{i=1}^F \) be the outward unit normals of the polyhedron we want to reconstruct and \( \set{A^0_i}_{i=1}^F \) the respective face areas, satisfying the assumptions of Minkoswki's theorem. If we denote by
\begin{equation}
P(\vec{h}) := \set{\vec{x} \in \R^3 \setst \braket{\vec{u}_i,\vec{x}} \leq h_i, i =1,\dotsc,F}
\end{equation}
the generic polyhedron with outward normals \( \vec{u}_i \), reconstructing the polyhedron is equivalent to finding \( \vec{h} \) such that
\begin{equation}
\label{eq:area constraint}
A_i(\vec{h}) : = \operatorname{area}(f_i) = A^0_i.
\end{equation}
Our reconstruction algorithm is going to follow this approach:
\begin{itemize}
\item find an expression for the functions \( A_i(\vec{h}) \);
\item compute the Jacobian \( J_{ij}(\vec{h}) = \pder{A_i}{h_j} \);
\item use a root-finding algorithm to find a solution to \cref{eq:area constraint}.
\end{itemize}

\subsection{Computing the area of \texorpdfstring{\( f_i(\vec{h}) \)}{fi(h)}}

Our algorithm to compute the area of each \( f_i(\vec{h}) \) is based on the one presented in \cite{BianchiDonaSpeziale2011}, which itself is based on and algorithm by Lasserre~\cite{Lasserre1983}: the idea is to compute the lengths of the regions \( e_{ij} \), \( j\neq i \) which bound \( f_i \) and use them to calculate the area. Although the underlying idea is the same as \cite{BianchiDonaSpeziale2011}, the presentation here will be more geometrical and avoid many unnecessary variables; most importantly, there are some oversights in \cite{BianchiDonaSpeziale2011} that will be addressed in the present paper.

Before we can devise a way to compute the face areas of \( P(\vec{h}) \), there is an important point to consider. Although we know from Minkowski's theorem that the reconstructed polyhedron is bounded, there is \emph{a priori} no guarantee that \( P(\vec{h}) \) is for arbitrary \( \vec{h} \), which could result in infinite areas. Luckily, \( P(\vec{h}) \) is guaranteed to be bounded by the following \namecref{prop:P(h) is bounded}.
\begin{proposition}\label{prop:P(h) is bounded}
Let \( \set{\vec{u}_i}_{i=1}^F \subset \S^2 \) be unit vectors spanning \( \R^3 \), and suppose there are numbers \( \alpha_i >0 \) such that
\begin{equation*}
\sum_i \alpha_i \vec{u}_i = 0.
\end{equation*}
Then the polyhedron
\(P(\vec{h})= \set{\vec{x} \in \R^3 \setst \braket{\vec{u}_i,\vec{x}} \leq h_i, i =1,\dotsc,F}\)
is bounded for any \( \vec{h}\in \R^F \).
\end{proposition}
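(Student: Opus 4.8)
The plan is to show that $P(\vec{h})$ is contained in a ball of finite radius; an empty polyhedron is trivially bounded, so I assume $P(\vec{h})\neq\emptyset$. The idea is that the closure relation $\sum_i\alpha_i\vec{u}_i=0$ with all $\alpha_i>0$ upgrades the one-sided constraints $\braket{\vec{u}_i,\vec{x}}\leq h_i$ into two-sided bounds on each $\braket{\vec{u}_k,\vec{x}}$, after which the spanning hypothesis pins down $\vec{x}$ itself.

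First I would fix an index $k$ and rewrite the closure relation as $-\vec{u}_k=\alpha_k^{-1}\sum_{i\neq k}\alpha_i\vec{u}_i$. Taking the inner product with an arbitrary $\vec{x}\in P(\vec{h})$ and using $\alpha_i>0$ together with $\braket{\vec{u}_i,\vec{x}}\leq h_i$ yields
\[
-\braket{\vec{u}_k,\vec{x}}=\alpha_k^{-1}\sum_{i\neq k}\alpha_i\braket{\vec{u}_i,\vec{x}}\leq\alpha_k^{-1}\sum_{i\neq k}\alpha_i h_i=:M_k,
\]
so that $\abs{\braket{\vec{u}_k,\vec{x}}}\leq\max\set{h_k,M_k}=:C_k$ for every $k$ and every $\vec{x}\in P(\vec{h})$.

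Next I would invoke the spanning hypothesis: since $\set{\vec{u}_i}_{i=1}^F$ spans $\R^3$, it contains a basis, say $\vec{u}_{i_1},\vec{u}_{i_2},\vec{u}_{i_3}$. The linear map $T\colon\R^3\to\R^3$ defined by $T\vec{x}=\paren{\braket{\vec{u}_{i_1},\vec{x}},\braket{\vec{u}_{i_2},\vec{x}},\braket{\vec{u}_{i_3},\vec{x}}}$ is then invertible, and the previous step bounds each of its three components on $P(\vec{h})$. Hence $\norm{\vec{x}}=\norm{T^{-1}T\vec{x}}$ is bounded by the operator norm of $T^{-1}$ times a constant depending only on $C_{i_1},C_{i_2},C_{i_3}$, uniformly over $\vec{x}\in P(\vec{h})$, which gives boundedness.

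I do not expect a genuine obstacle here; the only points needing care are the empty case (trivial) and the sign bookkeeping in the first display. An alternative, slightly slicker but less explicit, route goes through the recession cone: if $\braket{\vec{u}_i,\vec{x}}\leq 0$ for all $i$, then $0=\braket{\sum_i\alpha_i\vec{u}_i,\vec{x}}=\sum_i\alpha_i\braket{\vec{u}_i,\vec{x}}$ is a sum of nonpositive terms, forcing $\braket{\vec{u}_i,\vec{x}}=0$ for all $i$ and hence $\vec{x}=0$ by spanning; one then cites the standard fact that a polyhedron with trivial recession cone is bounded. I would keep the first argument, since producing an explicit bound matches the algorithmic flavour of the paper.
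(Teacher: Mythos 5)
Your proof is correct, and it takes a genuinely different route from the paper's. The paper argues by contradiction with unbounded rays: it picks a point $\vec{p}$ in the interior of $P(\vec{h})$, observes that a ray from $\vec{p}$ in direction $\vec{u}$ escapes all the planes $Q_i$ only if $\braket{\vec{u}_i,\vec{u}}\leq 0$ for all $i$, and then uses the closure relation plus spanning to rule such a $\vec{u}$ out; because this needs an interior point, the paper must split into three cases (empty, non-empty interior, non-empty but with empty interior, the last handled by enlarging all $h_i$ by $1$). Your argument sidesteps the case analysis entirely: the closure relation converts each one-sided constraint into the two-sided bound $-M_k\leq\braket{\vec{u}_k,\vec{x}}\leq h_k$ directly on all of $P(\vec{h})$, and invertibility of the map $\vec{x}\mapsto(\braket{\vec{u}_{i_1},\vec{x}},\braket{\vec{u}_{i_2},\vec{x}},\braket{\vec{u}_{i_3},\vec{x}})$ finishes it. What your version buys is an explicit, quantitative bounding radius (in keeping with the algorithmic spirit of the paper) and a shorter proof; what the paper's version buys is the geometric picture of escape directions, which is essentially your recession-cone alternative --- indeed your sketched alternative ($\braket{\vec{u}_i,\vec{x}}\leq 0$ for all $i$ forces $\vec{x}=0$) is the same computation the paper performs with $\vec{u}_F=-\sum_{i\neq F}\beta_i\vec{u}_i$. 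The only cosmetic point worth tightening is the constant: from $-M_k\leq\braket{\vec{u}_k,\vec{x}}\leq h_k$ the clean conclusion is $\abs{\braket{\vec{u}_k,\vec{x}}}\leq\max\set{\abs{h_k},\abs{M_k}}$ (your $\max\set{h_k,M_k}$ does happen to work, since a non-negative value is at most $h_k$ and a non-positive one is at least $-M_k$, but this deserves a half-sentence of justification).
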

\begin{proof}
We are going to consider three separate cases.
\begin{proofenumerate}
\item First note that if \( P(\vec{h}) = \emptyset \) the proposition holds, as the empty set is bounded.
\item\label{proof:non-empty interior} Suppose now that \( P(\vec{h}) \setminus \bigcup_i Q_i(\vec{h}) \neq \emptyset\), i.e., the interior of the polyhedron is non-empty, and let \( \vec{p}\) be a point in it. The ray
\begin{equation}
r(\vec{u}):= \set{\vec{p} + \lambda \vec{u} \setst \lambda \geq 0}
\end{equation}
in the direction \( \vec{u}\in \S^2 \) intersects the plane \( Q_i \) if and only if
\begin{equation}
\exists \lambda \geq 0 \quad\text{s.t.}\quad\braket{\vec{u}_i,\lambda\vec{u}} = h_i - \braket{\vec{u}_i,\vec{p}}>0,
\end{equation}
or equivalently iff
\begin{equation}
\braket{\vec{u}_i,\vec{u}} > 0.
\end{equation}
Suppose that \( P(\vec{h}) \) is unbounded; then there is \( \vec{u}\in \S^2 \) such that \( r(\vec{u}) \) does not intersect any of the \( Q_i \), i.e.,
\begin{equation}
\braket{\vec{u}_i,\vec{u}}\leq 0, \quad \forall i.
\end{equation}
However,
\begin{equation}
\vec{u}_F = - \sum_{i\neq F} \beta_i \vec{u}_i,\quad \beta_i : =\frac{\alpha_i}{\alpha_F}>0,
\end{equation}
so that
\begin{equation}
\braket{\vec{u}_F,\vec{u}} =  -\sum_{i\neq F} \beta_i \braket{\vec{u}_i,\vec{u}} \geq 0.
\end{equation}
Moreover, since \( \Span\set{\vec{u}_i}_{i=1}^F = \R^3\) and \( \norm{\vec{u}}=1 \) there must be at least one \( \vec{u}_i \) such that \( \braket{\vec{u}_i,\vec{u}}\neq 0 \); whether the latter is \( \vec{u}_F \) or not, this leads to
\begin{equation}
\braket{\vec{u}_F,\vec{u}} > 0,
\end{equation}
which is a contradiction.
\item Finally, when \( P(\vec{h}) \setminus \bigcup_i Q_i(\vec{h}) = \emptyset\) but \( P(\vec{h}) \neq \emptyset \) note that
\begin{equation}
\emptyset \neq P(\vec{h}) \subseteq P(\vec{h'}) \setminus \bigcup_i Q_i(\vec{h'}),\quad h'_i = h_i + 1
\end{equation}
which implies \(  P(\vec{h}')  \) is bounded as a consequence of point \ref{proof:non-empty interior}. It follows that \( P(\vec{h}) \) is bounded as well.
\end{proofenumerate}
\end{proof}

Armed with the results of \cref{prop:P(h) is bounded}, we can proceed in our computation of the area of \( f_i(\vec{h}) \). To do so, we are first going to find the lengths of all the edges \( e_{ij} \), \( j\neq i \) bounding \( f_i \). We distinguish two cases.

\subsubsection*{Length of \( e_{ij} \), case 1: \( \vec{u}_j \neq -\vec{u}_i \)}
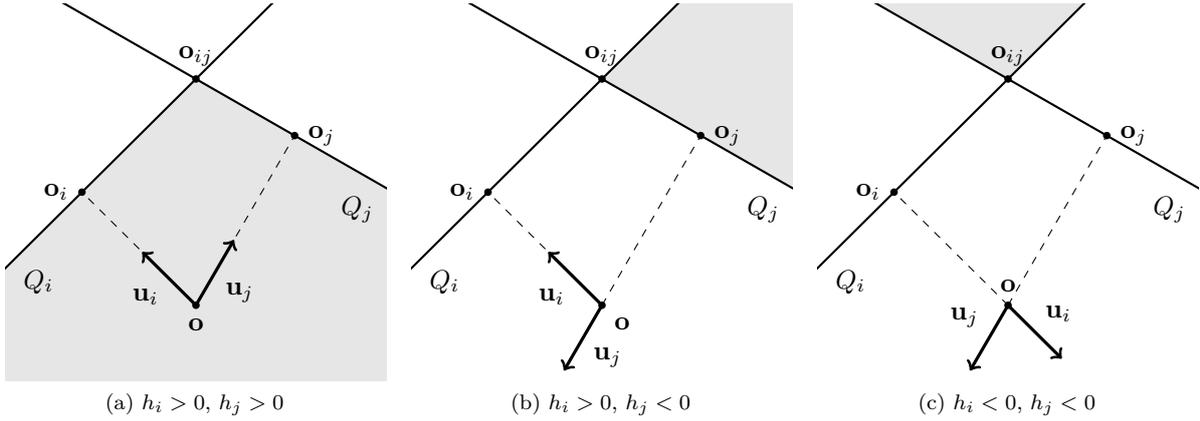
\begin{figure}[!th]
\centering
\subfloat[\( h_i>0\), \(h_j>0 \)]{\label{subfig:both positive}
\begin{tikzpicture}
\coordinate (o) at (0,0);
\coordinate (o_ij) at (90:3);
\coordinate (o_i) at (135:2.121);
\coordinate (o_j) at (60:2.598);

\clip[] (-2.5,-1) rectangle (2.5,4);

\fill[black!10!white,shift=(o_ij)] (0:0) -- (225:10cm) -- (-30:10cm) -- cycle;

\draw[dashed] (0:0) -- (o_i);
\draw[->, very thick] (0:0) -- (135:1) node[pos=0.5,below left] {\( \vec{u}_i \)};
\draw[thick,shorten < = -5cm, shorten > = -5cm] (o_i) -- (o_ij) node[pos=-0.6,below right] {\( Q_i \)};
\node[fill,circle,inner sep=1pt,label={left:\( \vec{o}_i \)}] at (o_i) {};

\draw[dashed] (0:0) -- (o_j);
\draw[->, very thick] (0:0) -- (60:1) node[pos=0.5,below right] {\( \vec{u}_j \)};
\draw[thick,shorten < = -5cm, shorten > = -5cm] (o_j) -- (o_ij) node[pos=-0.9,below left] {\( Q_j \)};
\node[fill,circle,inner sep=1pt,label={right:\( \vec{o}_j \)}] at (o_j) {};

\node[fill,circle,inner sep=1pt,label={below:\( \vec{o} \)}] at (o) {};

\node[fill,circle,inner sep=1pt,label={above:\( \vec{o}_{ij} \)}] at (o_ij) {};
\end{tikzpicture}
}
\hfill
\subfloat[\( h_i>0\), \(h_j<0 \)]{\label{subfig:positive + negative}
\begin{tikzpicture}
\coordinate (o) at (0,0);
\coordinate (o_ij) at (90:3);
\coordinate (o_i) at (135:2.121);
\coordinate (o_j) at (60:2.598);

\clip[] (-2.5,-1) rectangle (2.5,4);

\fill[black!10!white,shift=(o_ij)] (0:0) -- (225:-10cm) -- (-30:10cm) -- cycle;

\draw[dashed] (0:0) -- (o_i);
\draw[->, very thick] (0:0) -- (135:1) node[pos=0.5,below left] {\( \vec{u}_i \)};
\draw[thick,shorten < = -5cm, shorten > = -5cm] (o_i) -- (o_ij) node[pos=-0.6,below right] {\( Q_i \)};
\node[fill,circle,inner sep=1pt,label={left:\( \vec{o}_i \)}] at (o_i) {};

\draw[dashed] (0:0) -- (o_j);
\draw[->, very thick] (0:0) -- (60:-1) node[pos=0.5,below right] {\( \vec{u}_j \)};
\draw[thick,shorten < = -5cm, shorten > = -5cm] (o_j) -- (o_ij) node[pos=-0.9,below left] {\( Q_j \)};
\node[fill,circle,inner sep=1pt,label={right:\( \vec{o}_j \)}] at (o_j) {};

\node[fill,circle,inner sep=1pt,label={below right:\( \vec{o} \)}] at (o) {};

\node[fill,circle,inner sep=1pt,label={above:\( \vec{o}_{ij} \)}] at (o_ij) {};
\end{tikzpicture}
}
\hfill
\subfloat[\( h_i<0\), \(h_j<0 \)]{\label{subfig:both negative}
\begin{tikzpicture}
\coordinate (o) at (0,0);
\coordinate (o_ij) at (90:3);
\coordinate (o_i) at (135:2.121);
\coordinate (o_j) at (60:2.598);

\clip[] (-2.5,-1) rectangle (2.5,4);

\fill[black!10!white,shift=(o_ij)] (0:0) -- (225:-10cm) -- (-30:-10cm) -- cycle;

\draw[dashed] (0:0) -- (o_i);
\draw[->, very thick] (0:0) -- (135:-1) node[pos=0.5,above right] {\( \vec{u}_i \)};
\draw[thick,shorten < = -5cm, shorten > = -5cm] (o_i) -- (o_ij) node[pos=-0.6,below right] {\( Q_i \)};
\node[fill,circle,inner sep=1pt,label={left:\( \vec{o}_i \)}] at (o_i) {};

\draw[dashed] (0:0) -- (o_j);
\draw[->, very thick] (0:0) -- (60:-1) node[pos=0.5,above left] {\( \vec{u}_j \)};
\draw[thick,shorten < = -5cm, shorten > = -5cm] (o_j) -- (o_ij) node[pos=-0.9,below left] {\( Q_j \)};
\node[fill,circle,inner sep=1pt,label={right:\( \vec{o}_j \)}] at (o_j) {};

\node[fill,circle,inner sep=1pt,label={above:\( \vec{o} \)}] at (o) {};

\node[fill,circle,inner sep=1pt,label={above:\( \vec{o}_{ij} \)}] at (o_ij) {};
\end{tikzpicture}
}
\caption{Intersection of the planes \( Q_i \) and \( Q_j \) as seen from the plane spanned by \( \vec{u}_i \), \( \vec{u}_j \). Note that \( \vec{o} \) (the origin), \( \vec{o}_i \), \( \vec{o}_j \), and \( \vec{o}_{ij} \) all belong to this plane.  Different possibilities for the signs of \( h_i \) and \( h_j \) are shown; the darker area is the intersection of the half-spaces \( H_i \) and \( H_j \).}
\label{fig:span(u_i u_j)}
\end{figure}%

When \( \vec{u}_j \neq -\vec{u}_i \) the planes \( Q_i \) and \( Q_j \) are not parallel\footnote{Recall that the unit normals are required to be distinct, so it cannot be \( \vec{u}_j = \vec{u}_i \).}, hence the intersection \( Q_i \cap Q_j \) is a line.
Let \( \vec{o}_i \) be the projection of the origin on the plane \( Q_i \), i.e., 
\begin{equation}
\vec{o}_i = h_i \vec{u_i}.
\end{equation}
Note that \( \norm{\vec{o}-\vec{o}_i} = \abs{h_i} \), so that \( h_i \) gives us the \emph{signed distance} between the origin and \( Q_i \); in general, we are going to define the signed distance between a point \( \vec{x} \) and one of the planes \( Q_i \) as
\begin{equation}
\sigma(\vec{x},Q_i):= h_i - \braket{\vec{u}_i,\vec{x}},
\end{equation}
so that
\begin{equation}
\begin{cases}
\sigma(\vec{x},Q_i) \geq 0 & \casesif \vec{x}\in H_i\\
\sigma(\vec{x},Q_i)  < 0 & \casestextn{otherwise}.
\end{cases}
\end{equation}
We denote by \( \vec{o}_{ij} \) the intersection between \( Q_i \cap Q_j \) and  the plane containing the origin spanned by \( \vec{u}_i \) and \( \vec{u}_j \) (see \cref{fig:span(u_i u_j)}), that is
\begin{equation}
\begin{cases}
\vec{o}_{ij} = \alpha \vec{u}_i + \beta \vec{u}_j,\quad \alpha,\beta\in\R\\
\braket{\vec{u}_i,\vec{o}_{ij}}=h_i\\
\braket{\vec{u}_j,\vec{o}_{ij}}=h_j,
\end{cases}
\end{equation}
whose solution is
\begin{equation}
\vec{o}_{ij} = \frac{r_{ji} \vec{u}_i + r_{ij} \vec{u}_j}{1-\braket{\vec{u}_i,\vec{u}_j}^2} \equiv \frac{r_{ji} \vec{u}_i + r_{ij} \vec{u}_j}{\norm{\vec{u}_i \times\vec{u}_j}^2},
\quad
r_{ij} : = h_j - \braket{\vec{u}_i,\vec{u}_j} h_i.
\end{equation}
Since the line \( Q_i \cap Q_j \) is perpendicular to both \( \vec{u}_i \) and \( \vec{u}_j \) it has \( \vec{u}_i \times \vec{u}_j\) as a direction vector, i.e.,
\begin{equation}
Q_i \cap Q_j = \set{\vec{o}_{ij} + \lambda\vec{u}_i \times \vec{u}_j \setst \lambda \in \R}.
\end{equation}
Each point \( \vec{o}_{ij} + \lambda\vec{u}_i \times \vec{u}_j \) belongs to \( e_{ij} \) if and only if
\begin{equation}
\braket{\vec{u}_k,\vec{o}_{ij} + \lambda\vec{u}_i \times \vec{u}_j} \leq h_k,\quad \forall k \neq i,j,
\end{equation}
that is
\begin{equation}
\label{eq: a-lambda-b inequality}
a^k_{ij} \lambda \leq b^k_{ij},\quad \forall k \neq i,j,
\end{equation}
where
\begin{equation}
a^k_{ij} := \braket{\vec{u}_k, \vec{u}_i \times \vec{u}_j},
\quad
b^k_{ij}:= h_k - \frac{r_{ij} \braket{\vec{u}_j,\vec{u}_k} + r_{ji} \braket{\vec{u}_i,\vec{u}_k}}{\norm{\vec{u}_i \times \vec{u}_j}^2};
\end{equation}
it follows that
\begin{equation}
\lambda \leq \lambda^\text{max}_{ij} := \min_{k} \set[\bigg]{\frac{b^k_{ij}}{a^k_{ij}} \setst[\bigg] a^k_{ij}>0}
\quad\text{and}\quad
\lambda \geq \lambda^\text{min}_{ij} := \max_{k} \set[\bigg]{\frac{b^k_{ij}}{a^k_{ij}} \setst[\bigg] a^k_{ij}<0}.
\end{equation}
Note however that we still have to take into account \cref{eq: a-lambda-b inequality} when \( a^k_{ij} =0\): in fact, in this case the equation is either solved by every lambda (if \( b^k_{ij}\geq 0 \)) or by none of them  (if \( b^k_{ij}< 0 \)). This point is ignored in~\cite{BianchiDonaSpeziale2011}, where only the case \( a^k_{ij} \neq 0\) is used to compute the length, leading to an erroneous calculation is some cases.

Putting everything together, we can write
\begin{equation}
e_{ij} = \set{\vec{o}_{ij} + \lambda\, \vec{u}_i \times \vec{u}_j \setst \lambda \in \Lambda_{ij}},
\end{equation}
where
\begin{equation}
\Lambda_{ij} = \bracks{\lambda^\text{min}_{ij},\lambda^\text{max}_{ij}} \cap  \bigcap_{k} \set{\lambda \in \R \setst a^k_{ij}=0, b^k_{ij}\geq 0}
\end{equation}
and we use the convention that \( \bracks{a,b}=\emptyset  \) if \( a>b \). When \( \Lambda_{ij} \neq \emptyset \) the length of \( e_{ij} \) is
\begin{equation}
L_{ij} = \norm{(\vec{o}_{ij} + \lambda^\text{max}_{ij}\, \vec{u}_i \times \vec{u}_j) - (\vec{o}_{ij} + \lambda^\text{min}_{ij}\, \vec{u}_i \times \vec{u}_j)}
=
(\lambda^\text{max}_{ij} - \lambda^\text{min}_{ij}) \norm{\vec{u}_i \times \vec{u}_j},
\end{equation}
so that in general
\begin{equation}
L_{ij} = \lambda_{ij} \norm{\vec{u}_i \times \vec{u}_j},\quad
\lambda_{ij} :=
\begin{cases}
\lambda^\text{max}_{ij} - \lambda^\text{min}_{ij} & \casesif \Lambda_{ij}\neq \emptyset
\\
0 & \casestextn{otherwise}.
\end{cases}
\end{equation}

\subsubsection*{Length of \( e_{ij} \), case 2: \( \vec{u}_j = -\vec{u}_i \)}

Suppose now that \( \vec{u}_j = -\vec{u}_i \), which implies \( Q_i \) and \( Q_j \) are parallel. The intersection \( Q_i \cap Q_j \) in this case is either empty (\( Q_i \neq Q_j \)) or a plane (\( Q_i = Q_j \)). In either case \( e_{ij} \) is not an edge, so we set
\begin{equation}
L_{ij} = \lambda_{ij} = 0.
\end{equation}
\subsubsection*{Using the lengths to find the area}
Now that we have the lengths of the edges bounding \( f_i \) (if any), we can use them to compute its area as follows. Given a convex polygon---which \( f_i \) is, if it is a face---such as in \cref{fig:area with triangles}, its area can be computed in the following way:
\begin{figure}[ht]
\centering
\begin{tikzpicture}[thick,scale=1.5]
\coordinate (a) at (0:0);
\coordinate (b) at (30:1.5);
\coordinate (c) at (-10:3);
\coordinate (d) at (-40:2.5);
\coordinate (o) at (5:5);
\coordinate (midpoint) at ($(b)!0.5!(d)$);
\node[right] at (o) {\( O \)};
\node[left] at (a) {\( A \)};
\node[above] at (b) {\( B \)};
\node[right] at (c) {\( C \)};
\node[below] at (d) {\( D \)};

\filldraw (o) circle (1pt);
\filldraw[fill=black!10!white] (a) -- (b) -- (c) -- (d) -- cycle;
\draw[dashed] (a) -- (o);
\draw[dashed] (b) -- (o);
\draw[dashed] (c) -- (o);
\draw[dashed] (d) -- (o);
\end{tikzpicture}
\\[\baselineskip]
\( \operatorname{area}(ABCD) =  \operatorname{area}(ABO) + \operatorname{area}(ADO) - \operatorname{area}(DCO) - \operatorname{area}(BCO)\)
\caption{Calculation of the area of a convex polygon using triangles. Note that if we choose the common vertex of the triangles \( O \) to be outside the polygon, some of the triangle areas have to be subtracted, namely those for which \( O \) is on the other side of the edge with respect to the polygon.}
\label{fig:area with triangles}
\end{figure}
\begin{itemize}
\item choose any point \( O \) on the plane containing the polygon;
\item for each edge of the polygon, construct the triangle with the endpoints of the edge and \( O \) as vertices.
\end{itemize}
Note that by construction the triangles cover the polygon. There are two possible cases, depending on the position of \( O \).
\begin{enumerate}
\item If \( O \) is inside the polygon, the triangles do not overlap and their union coincides with the polygon itself. It follows that the area of the polygon can be obtained by summing the area of each triangle.
\item If \( O \) is outside the polygon, some of the triangles overlap. Specifically, there are two kinds of triangles, i.e., those for which \( O \) and the polygon lie on the same side the associated edge (e.g., \( ABO \) in the figure), and those for which they lie on opposite sides (e.g., \( BCO \)); we will call them respectively of type I and II. One can easily see that the area of the polygon is obtained by summing the areas of all the type I triangles and subtracting those of all the type II triangles.
\end{enumerate}
We are going to compute the area \( A_i(\vec{h}) \) using this procedure, with \( \vec{o}_i \) as our chosen point on the plane \( Q_i \). The area of the triangle associated to \( e_{ij} \) is
\begin{equation}
\frac{1}{2} L_{ij} d_{ij},
\end{equation}
where \( d_{ij} \) is the distance between \( \vec{o}_i \) and \( Q_i \cap Q_j \), i.e, the distance between \( \vec{o}_i \) and \( \vec{o}_{ij} \), as
\begin{equation}
\braket{\vec{o}_{ij} - \vec{o}_i, \vec{u}_i \times \vec{u}_j} = 0.
\end{equation}
To account for the possibility of the triangle being of type II, we will make \( d_{ij} \) a signed distance, with
\begin{equation}
\begin{cases}
d_{ij} \geq 0 & \casesif \vec{o}_i \in H_j\\
d_{ij} < 0 & \casestextn{otherwise}.
\end{cases}
\end{equation}
One can easily see that
\begin{equation}
\vec{o}_{ij} - \vec{o}_i = r_{ij}\frac{\vec{u}_j - \braket{\vec{u}_i,\vec{u}_j} \vec{u}_i}{\norm{\vec{u}_i \times \vec{u}_j}^2},
\end{equation}
which since
\begin{equation}
\norm{\vec{u}_j - \braket{\vec{u}_i,\vec{u}_j} \vec{u}_i}^2 =
1 - \braket{\vec{u}_i,\vec{u}_j}^2 = \norm{\vec{u}_i \times \vec{u}_j}^2
\end{equation}
implies that
\begin{equation}
\norm{\vec{o}_{ij} - \vec{o}_i} = \frac{\abs{r_{ij}}}{\norm{\vec{u}_i \times \vec{u}_j}}.
\end{equation}
Noticing that
\begin{equation}
r_{ij} = h_j - \braket{\vec{u}_i,\vec{u}_j} h_i \geq 0
\quad\Leftrightarrow\quad
\braket{\vec{u}_j, \vec{o}_i} \leq h_j
\quad\Leftrightarrow\quad
\vec{o}_i \in H_j,
\end{equation}
we see that it must be
\begin{equation}
d_{ij} = \frac{r_{ij}}{\norm{\vec{u}_i \times \vec{u}_j}};
\end{equation}
the area of \( f_i \) can finally be calculated as\footnote{The case where \( \vec{u}_j = -\vec{u}_i \), in which \( d_{ij} \) is not defined, is accounted for since \( L_{ij}=\lambda_{ij}=0 \).}
\begin{equation}
A_i(\vec{h}) = \sum_{j\neq i} \frac{1}{2} L_{ij} d_{ij} \equiv \sum_{j\neq i} \frac{1}{2} \lambda_{ij} r_{ij}.
\end{equation}

Pseudocode for the calculation of all the face areas of \( P(\vec{h}) \) is presented in \cref{alg:areas} as a summary. In this form we can easily see that the time complexity for the computation of the areas is \( O(F^3) \). Since it is obviously
\begin{equation}
\lambda_{ji} = \lambda_{ij},
\end{equation}
the algorithm can be made more efficient by only cycling through \( j>i \), which is indeed what was done in the Python implementation.
\begin{algorithm}[!htbp]
\caption{Face areas of \( P(\vec{h}) \)}
\label{alg:areas}
\begin{algorithmic}[1]
\Procedure{$\text{Areas}$}{$\vec{h}$}
\State \( A \gets (0,0,\dotsc,0) \)\Comment{Initialise areas array (sequence of \( n \) zeros)}
\State Compute all the \( \lambda_{ij} \):
\For {\( i\in\set{1,\dotsc,F} \)}
\For{$j \neq i$}
\If {\( \braket{\vec{u}_i,\vec{u}_j} = -1 \)}
\State \( \lambda_{ij} \gets 0 \)\Comment{If the \( f_i \) and \( f_j \) are parallel \( \lambda_{ij} \) must be zero.}
\Else
\State $\lambda^\text{max}_{ij}\gets \min_{a^k_{ij}>0} \set*{{b^k_{ij}(\vec{h})}/{a^k_{ij}}}$
\State $\lambda^\text{min}_{ij}\gets \max_{a^k_{ij}<0} \set*{{b^k_{ij}(\vec{h})}/{a^k_{ij}}}$
\State $\lambda_{ij} \gets \max(0,\lambda^\text{max}_{ij} - \lambda^\text{min}_{ij})$
\For{$k\neq i,j$}
\If {\( a^k_{ij} = 0 \) \textbf{and} \( b^k_{ij}(\vec{h})<0 \)}
\State \( \lambda_{ij} \gets 0 \)
\Comment{If \( a^k_{ij}=0 \) and \( b^k_{ij} <0\) for some \( k\neq i,j \) then \( \lambda_{ij} = 0 \).}
\EndIf
\EndFor
\EndIf
\EndFor
\EndFor
\State Compute areas:
\For {\( i\in\set{1,\dotsc,F} \)}
\For{$j \neq i$}
\State $A_i \gets A_i + \frac{1}{2} \lambda_{ij} r_{ij}(\vec{h})$
\EndFor
\EndFor
\State \textbf{return} $A$
\EndProcedure
\end{algorithmic}
\end{algorithm}

\subsection{The Jacobian of \texorpdfstring{\( A(\vec{h}) \)}{A(h)}}

As we have an expression for \( A(\vec{h}) \) in terms of \( \vec{h} \), we can compute the Jacobian of this vector function to make it easier for the root-finding algorithm to find a solution.

\subsubsection*{Reduction of degrees of freedom}
Recall that there are infinitely many \( \vec{h} \) satisfying \cref{eq:area constraint}. We may be tempted to leave \( \vec{h} \) arbitrary or, as it was done in \cite{BianchiDonaSpeziale2011}, to require \( h_i\geq 0 \), i.e., that the origin lies inside the polyhedron. Either way, however, the solution \( \vec{h}^* \) of \cref{eq:area constraint} is only defined up to 3 degrees of freedom, effectively resulting in infinitely many roots for the function
\begin{equation}
g: \vec{h} \in \R^F \mapsto A(\vec{h}) - A^0 \in \R^F.
\end{equation}
To see why this is a problem, let us take a closer look on how \( A(\vec{h}) \) depends on \( \vec{h} \). Reference \cite{BianchiDonaSpeziale2011} erroneously states that this function is \emph{quadratic} in \( \vec{h} \), being the sum of products of linear functions. The functions \( \lambda_{ij}(\vec{h}) \), however, are only \emph{piecewise linear}, as they are defined in terms of maxima and minima, just as a function of the form
\begin{equation}
f(x,y):= \max \set{x+y,2x-y} \equiv
\begin{cases}
x+y & \casesif x \leq 2y\\
2x-y & \casesif x > 2y.
\end{cases}
\end{equation}
An important consequence of this fact is that \( \lambda_{ij} \) is only piecewise differentiable, similarly to how, for the previous example,
\begin{equation}
\pder{f}{x}(x,y) = 
\begin{cases}
1 & \casesif x < 2y\\
2 & \casesif x > 2y
\end{cases}
\end{equation}
but it is undefined for all the points in \( \set{(2y,y) \setst y \in \R} \subset \R^2 \). It follows in particular that, because areas are necessarily non-negative, that \( g_i(\vec{h}) \) can become \emph{locally constant} when \( A_i(\vec{h}) = 0 \). For example, if
\begin{subequations}
\begin{alignat}{7}
&\vec{u}_1 = (1,0,0) &\qquad& \vec{u}_2 =(0,1,0) &\qquad& \vec{u}_3 =(0,0,1)\\
&\vec{u}_4 = (-1,0,0) && \vec{u}_5=(0,-1,0) && \vec{u}_6 =(0,0,-1)\\
&\vec{u}_7 = \mathrlap{(\sqrt{2}/{2},\sqrt{2}/{2},0),}
\end{alignat}
\end{subequations}
one can easily see that
\begin{equation}
h_7 \geq \frac{\sqrt{2}}{2}(h_1+h_2) \quad \Rightarrow \quad A_7(\vec{h})=0.
\end{equation}
This is particularly bad when trying to find a root of \( g(\vec{h}) \), since from the point of view of the algorithm
\begin{equation}
h_7 = 2(h_1 + h_2)
\quad\text{and}\quad
h_7 = 2000(h_1 + h_2)
\end{equation}
are not distinguishable, i.e., it has no way to know that \( h_7 \) should be \emph{decreased} to get closer to the solution.

Taking these things into account, we can see why having three degrees of freedom is a problem: at each iteration the root-finding algorithm will try to get closer to a different root; regardless of the initial guess we choose, we cannot guarantee that the algorithm is not going to be led to a situation were at least one of the \( g_i(\vec{h}) \) is constant and will fail to converge.

We will overcome this situation by fixing the degrees of freedom: instead of requiring that the origin is inside the polyhedron, we fix its position once and for all. A convenient location for the origin is at the intersection of three of the planes \( Q_i \). To ensure that the intersection is in fact a point, we choose three planes such that the respective unit normals span \( \R^3 \), which can always be done\footnote{Three independent unit vectors can efficiently be found in the following way: choose one vector \( \vec{u}_{i_1} \), then cycle through the others to find one, say \( \vec{u}_{i_2} \), with \( \abs{\braket{\vec{u}_{i_1},\vec{u}_{i_2}}} \neq 1 \). Finally, cycle through the remaining ones to find \( \vec{u}_{i_3} \) such that \( \braket{\vec{u}_{i_3},\vec{u}_{i_1} \times \vec{u}_{i_2}} \equiv a^{i_3}_{i_1 i_2}\neq 0 \). This method is efficient, as the dot products and the numbers \( a^k_{ij} \) have to be computed anyway to calculate \( A(\vec{h}) \).};
for simplicity we are going to assume that the \( \vec{u}_i \) are ordered such that
\begin{equation}
\Span\set{\vec{u}_1,\vec{u}_2,\vec{u}_3}=\R^3
\end{equation}
and choose the planes \( Q_1\), \( Q_2 \), \( Q_3 \). It follows that these planes are at distance \( 0 \) from the origin, i.e., we are only going to look at polyhedra of the form
\begin{equation}
P(0,0,0,\widetilde{\vec{h}}),\quad \widetilde{\vec{h}} \in \R^{F-3}.
\end{equation}
The function
\begin{equation}
\widetilde{g}:\widetilde{\vec{h}} \in \R^{F-3} \mapsto A(0,0,0,\widetilde{\vec{h}}) - A^0 \in \R^F
\end{equation}
has a \emph{unique root}, which we can now find with a root-finding algorithm.

\subsubsection*{Computing the Jacobian}
Having reduced the degrees of freedom, we can compute the jacobian of the function \( \widetilde{g} \). Let greek letters \(\alpha,\beta \) be indices ranging in \( \set{4,\dotsc,F} \). The Jacobian is the \( F\times(F-3) \) matrix with entries
\begin{equation}
J_{i\beta}(\widetilde{\vec{h}}) = \pder{\widetilde{g}_i}{h_\beta}(\widetilde{\vec{h}}) \equiv \pder{A_i}{h_\beta}(0,0,0,\widetilde{\vec{h}}).
\end{equation}
Recall that \( A(\vec{h}) \) is only piecewise differentiable, so the Jacobian can be discontinuous; when this happens we are going to assign a value for \( J \) at the discontinuity anyway by randomly choosing from one of the limiting values.
Since
\begin{equation}
A_i = \frac{1}{2}\sum_{j\neq i} \lambda_{ij} r_{ij},
\end{equation}
we have
\begin{equation}\label{eq:Jacobian entry}
\pder{A_i}{h_\beta} = \frac{1}{2}\sum_{j\neq i}\paren*{\pder{\lambda_{ij}}{h_\beta}r_{ij} + \lambda_{ij}\pder{r_{ij}}{h_\beta}};
\end{equation}
we can easily compute
\begin{equation}
\label{eq: derivative of r}
\pder{r_{ij}}{h_\beta} = \delta_{\beta j} - \delta_{\beta i} \braket{\vec{u}_i,\vec{u}_j},
\end{equation}
but to find the derivatives of \( \lambda_{ij} \) we first need to find those of \( \lambda^\text{max}_{ij} \) and \( \lambda^\text{min}_{ij} \), which depend on the \( b^k_{ij} \).
We can make explicit the dependence of \( b^k_{ij} \) on \( \widetilde{\vec{h}} \) by introducing the quantities
\begin{equation}
N^k_{ij} := \frac{\braket{\vec{u}_i,\vec{u}_j} \braket{\vec{u}_k,\vec{u}_j} - \braket{\vec{u}_k,\vec{u}_i}}{\norm{\vec{u}_i\times\vec{u}_j}^2},
\end{equation}
in terms of which we can write\footnote{Note that \( N^k_{ij} \) is not defined when \( \vec{u}_j = -\vec{u}_i \), but in this case \( \lambda_{ij} =0\). }
\begin{equation}
b^k_{ij} = h_k + h_i N^k_{ij} + h_j N^k_{ji}.
\end{equation}
Note that
\begin{equation}
\lambda^\text{max}_{ij} = \min_{k} \set[\bigg]{\frac{b^k_{ij}}{a^k_{ij}} \setst[\bigg] a^k_{ij}>0}
\equiv
\max_{k} \set[\bigg]{\frac{b^k_{ij}}{a^k_{ji}} \setst[\bigg] a^k_{ji}<0} = \lambda^\text{min}_{ji}.
\end{equation}
When \( \lambda_{ij} \neq 0 \), choose
\begin{equation}
k_\text{max}^{ij} \in \argmin_{k} \set[\bigg]{\frac{b^k_{ij}}{a^k_{ij}} \setst[\bigg] a^k_{ij}>0},
\quad
k_\text{min}^{ij} \in \argmax_{k} \set[\bigg]{\frac{b^k_{ij}}{a^k_{ij}} \setst[\bigg] a^k_{ij}<0};
\end{equation}
if the \( \argmin \) or \( \argmax \) contain more than one element, we can choose from them randomly, as long as we ensure
\begin{equation}
k_\text{max}^{ij} = k_\text{min}^{ji}.
\end{equation}
It follows that, when \( \lambda_{ij} \neq 0 \),
\begin{equation}
\lambda^\text{max}_{ij}=\frac{b^{k_\text{max}^{ij}}_{ij}}{a^{k_\text{max}^{ij}}_{ij}},
\quad
\lambda^\text{min}_{ij}=\frac{b^{k_\text{min}^{ij}}_{ij}}{a^{k_\text{min}^{ij}}_{ij}}.
\end{equation}
Introducing the numbers
\begin{align}
c_{ij} &:= \bracks{\lambda_{ij}\neq 0} \paren*{
\frac{N^{k_\text{max}^{ij}}_{ij}}{a^{k_\text{max}^{ij}}_{ij}}
-
\frac{N^{k_\text{min}^{ij}}_{ij}}{a^{k_\text{min}^{ij}}_{ij}}}
\equiv
\bracks{\lambda_{ij}\neq 0} \paren*{
\frac{N^{k_\text{max}^{ji}}_{ij}}{a^{k_\text{max}^{ji}}_{ji}}
-
\frac{N^{k_\text{min}^{ji}}_{ij}}{a^{k_\text{min}^{ji}}_{ji}}}
\\
y^k_{ij} &:= \bracks{\lambda_{ij}\neq 0}  \frac{1}{a^k_{ij}} \paren[\big]{\bracks{k=k^{ij}_\text{max}} - \bracks{k=k^{ij}_\text{min}}},
\end{align}
where we used the \emph{Iverson bracket}
\begin{equation}
\bracks{P}=
\begin{cases}
1 & \casesif \mbox{\( P \) is true}\\
0 & \casesif \mbox{\( P \) is false},
\end{cases}
\end{equation}
we see that
\begin{equation}\label{eq: derivative of lambda}
\pder{\lambda_{ij}}{h_\beta}
= \delta_{\beta i} c_{ij}
+ \delta_{\beta j} c_{ji}
+ y^\beta_{ij},
\end{equation}
which automatically includes the case \( \lambda_{ij} = 0 \). The Jacobian is finally computed by substituting \cref{eq: derivative of r,eq: derivative of lambda} into \cref{eq:Jacobian entry}.

Pseudocode for the computation of the Jacobian can be found in \cref{alg:jacobian}. The time complexity for the computation is \( O(F^3) \). In the Python implementation the Jacobian and the areas are calculated together, to avoid repetitions.

\begin{algorithm}[H]
\caption{Jacobian}
\label{alg:jacobian}
\begin{algorithmic}[1]
\Procedure{$\text{Jacobian}$}{$\vec{h}$}
\State{\( J \gets \vec{0}_{n,n-3} \)}\Comment{Initialise Jacobian (\( n\times(n-3) \) zero matrix)}
\State Compute \( k^{ij}_\text{max} \) and \( k^{ij}_\text{min} \):
\For{$i\in\set{1,\dotsc,F}$}
\For{$j\neq i$}
\State \( k^{ij}_\text{max} \gets \text{random element of } \argmin_{a^k_{ij}>0}\set*{b^k_{ij}(\vec{h})/a^k_{ij}} \)
\State \( k^{ij}_\text{min} \gets \text{random element of } \argmax_{a^k_{ij}<0}\set*{b^k_{ij}(\vec{h})/a^k_{ij}} \)
\EndFor
\EndFor
\State Compute each Jacobian entry \( J_{i\beta} \):
\For{$i\in\set{1,\dotsc,F}$, \( \beta\in\set{3,\dotsc,F} \)}
\For{$j \neq i$}
\If {\( \beta = i \)}
\State \( J_{i\beta} \gets J_{i\beta} + \frac{1}{2} \bracks*{c_{\beta j}(\vec{h}) r_{ij}(\vec{h}) - \braket{\vec{u}_i,\vec{u}_j} \lambda_{ij}(\vec{h})}  \)
\Comment Note that \( y^i_{ij} = y^j_{ij} = 0 \)
\ElsIf {\( \beta = j \)}
\State \( J_{i\beta} \gets J_{i\beta} + \frac{1}{2} \bracks*{c_{\beta i}(\vec{h}) r_{ij}(\vec{h}) + \lambda_{ij}(\vec{h})}  \)
\Else
\State \( J_{i\beta} \gets J_{i\beta} + \frac{1}{2}y^\beta_{ij}(\vec{h}) r_{ij}(\vec{h}) \)
\EndIf
\EndFor

\EndFor
\State \textbf{return} $J$
\EndProcedure
\end{algorithmic}
\end{algorithm}

\subsection{Finding the Polyhedron}
\begin{figure}
\centering
\foreach \x in {25,50,100}{
\subfloat[\( F=\x\)]{
\includegraphics[height=5.1cm,trim={0 -1cm 0 0},clip]{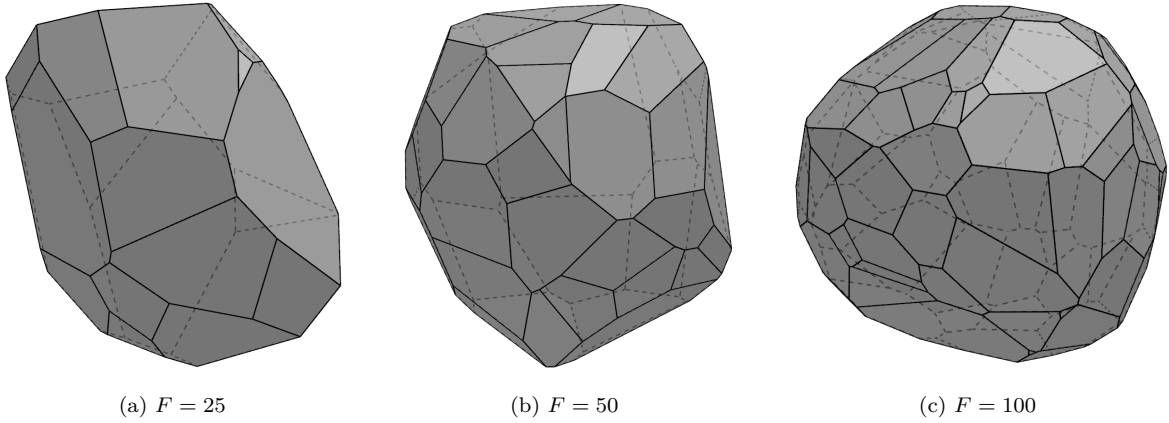}
}
\hfill
}
\caption{Examples of reconstructed polyhedra with number of faces \( F=25,50,100 \). In each case \( F-1 \) of the normals and areas were uniformly distributed respectively on \( \S^2 \) and the interval \( (0,1] \), while the last normal and area were constructed so that \( \sum_i A^0_i \vec{u}_i = 0 \). }
\label{fig:examples}
\end{figure}
We now have all the ingredients to reconstruct the polyhedron. The approach followed here greatly differs from the one presented in \cite{BianchiDonaSpeziale2011}. In the latter, the polyhedron is reconstructed by \emph{minimising} the function
\begin{equation}\label{eq:minimise}
\sum_{i} (A_i - A^0_i)^2.
\end{equation}
There are two problems with this approach.
\begin{itemize}
\item There is no guarantee that the function only has one minimum. Depending on the initial guess supplied to the algorithm, a local minimum for which the function doe not vanish may be found.
\item Since the algorithm is trying to minimise the function instead of finding a root, it can end up being drawn to a value of \( \vec{h} \) that makes the function constant, i.e., \(  A(\vec{h}) = \vec{0} \).
\end{itemize}
Both situations are dangerous, as the algorithm will fail to reconstruct the right polyhedron, but it will \emph{believe it succeeded}. The author has tried to reproduce the algorithm presented in \cite{BianchiDonaSpeziale2011}, but has found minimising \eqref{eq:minimise} to be unreliable, even when fixing the degrees of freedom.

The approach we are going to follow, as mentioned already, is to find the roots of the function \( \widetilde{g} \). The algorithm used in the Python implementation is the Levenberg--Marquardt as implemented in \emph{SciPy}~\cite{scipy}.
To avoid a situation in which some entries of the Jacobian vanish, it is important to choose the initial guess \( \widetilde{\vec{h}}_0 \) in such a way that
\begin{equation}
A_i(0,0,0,\widetilde{\vec{h}}_0) > 0,\quad \forall i.
\end{equation}
A reliable way to do so is to choose \( \widetilde{\vec{h}}_0 \) such that each \( Q_i \) is at a fixed distance \( D \) from a given point \( \vec{c} \) inside the polyhedron. There is a unique \( \vec{c} \) satisfying the requirements for \( D \) fixed: in fact, note that it must be\footnote{Recall that \( \Span\set{\vec{u}_1,\vec{u}_2,\vec{u}_3}=\R^3 \) and that the distance from a point \( \vec{x}\in P(\vec{h}) \) and \( Q_i \) is \( \sigma(\vec{x},Q_i)=h_i - \braket{\vec{u}_i,\vec{x}}\geq 0 \). }
\begin{equation}
\begin{cases}
\vec{c} = \alpha_1 \vec{u}_1 + \alpha_2 \vec{u}_2+ \alpha_3 \vec{u}_3
\\
\braket{\vec{u}_i,\vec{c}} = h_i - D,\quad i=1,\dotsc,F,
\end{cases}
\end{equation}
and in particular
\begin{equation}
\braket{\vec{u}_1,\vec{c}} = \braket{\vec{u}_2,\vec{c}} = \braket{\vec{u}_3,\vec{c}} =- D,
\end{equation}
that is
\begin{equation}\label{eq:matrix system}
\begin{pmatrix}
1 & \braket{\vec{u}_1,\vec{u}_2} & \braket{\vec{u}_1,\vec{u}_3}\\
\braket{\vec{u}_2,\vec{u}_1} & 1 & \braket{\vec{u}_2,\vec{u}_3}\\
\braket{\vec{u}_3,\vec{u}_1} & \braket{\vec{u}_3,\vec{u}_2} & 1
\end{pmatrix}
\begin{pmatrix}
\alpha_1 \\ \alpha_2 \\ \alpha_3
\end{pmatrix}
=
\begin{pmatrix}
-D \\ -D \\ -D
\end{pmatrix};
\end{equation}
there is a unique solution to \eqref{eq:matrix system}, since the \( 3\times 3 \) matrix on the l.h.s. is the \emph{Gram matrix} of the independent vectors \( \vec{u}_1 \), \( \vec{u}_2 \), \( \vec{u}_3 \), which implies it has non-zero determinant. Once \( \vec{c} \) is fixed, the initial guess is given by
\begin{equation}
h^0_\alpha = D + \braket{\vec{u}_\alpha,\vec{c}}.
\end{equation}
It only remains to fix \( D \). To do so, we first choose some value, say \( D_0 = 1 \), and use it to compute \( \widetilde{\vec{h}}_0 \). If we denote, with some abuse of notation,
\begin{equation}
A_i(D) := A_i(0,0,0,\widetilde{\vec{h}}_0(D)),
\end{equation}
one can easily see that
\begin{equation}
A_i(\lambda D) = \lambda^2 A_i(D), \quad \lambda>0,
\end{equation}
so that in particular
\begin{equation}
A_i(D) = D^2 A_i(D_0).
\end{equation}
We are going to choose \( D \) such that, in average,
\begin{equation}
\frac{A^0_i}{A_i(D)}= \frac{A^0_i}{D^2 A_i(D_0)} = 1,
\end{equation}
i.e.
\begin{equation}
D = \sqrt{\frac{1}{F}\sum_i \frac{A^0_i}{A_i(D_0)}};
\end{equation}
the updated \( \widetilde{\vec{h}}_0 \) is then given by
\begin{equation}
\widetilde{\vec{h}}_0 \rightarrow D \widetilde{\vec{h}}_0.
\end{equation}
Some examples of polyhedra reconstructed with the algorithm can be seen in \cref{fig:examples}

\section{Concluding remarks}\label{sec:conclusions}

To summarise, we have introduced an algorithm to reconstruct a convex polyhedron, given its face normals and face areas. To do so, a vector function measuring how distant the face areas of a polyhedron \( P(\vec{h}) \)---having the appropriate face normals--- are from the correct values was explicitly constructed; the reconstructed polyhedron is obtained by finding the root of this function. The highlight of the algorithm, as well as the main differences with the pre-existing literature~\cite{BianchiDonaSpeziale2011}, were:
\begin{itemize}
\item the explicit construction of the Jacobian matrix for the vector function;
\item allowing negative \( h_i \), i.e. not requiring that the origin lie inside the polyhedron;
\item fixing the origin to be at a specific point, effectively fixing three of the variables \( h_i \), in order to have a unique root.
\end{itemize}
The actual root-finding was done using a Levenberg--Marquardt, as it proved reliable in the tests performed by the author. The author leaves the possibility of developing an \emph{ad hoc} root-finding technique for further investigations.

\printbibliography
\end{document}